\title{Faster Privacy-Preserving Computation of Edit Distance with Moves}
\author{
  Yohei Yoshimoto\\
  Kyushu Institute of Technology\\
  680-4 Kawazu, Iizuka, Fukuoka 820-8502, Japan \\
  \texttt{y\_yoshimoto@donald.ai.kyutech.ac.jp} \\
   \And
  Masaharu Kataoka\\
  Kyushu Institute of Technology\\
  680-4 Kawazu, Iizuka, Fukuoka 820-8502, Japan \\
  \texttt{m\_kataoka@donald.ai.kyutech.ac.jp} \\
   \And
  Yoshimasa Takabatake\\
  Kyushu Institute of Technology\\
  680-4 Kawazu, Iizuka, Fukuoka 820-8502, Japan \\
  \texttt{takabatake@donald.ai.kyutech.ac.jp} \\
   \And
  Tomohiro I\\
  Kyushu Institute of Technology\\
  680-4 Kawazu, Iizuka, Fukuoka 820-8502, Japan \\
  \texttt{tomohiro@donald.ai.kyutech.ac.jp} \\
   \And
  Kilho Shin\\
  Gakushuin University\\
  1-5-1 Mejiro, Toshimaku, Tokyo 171-8588, Japan \\
  \texttt{kilhoshin314@gmail.com} \\
   \And
  Hiroshi Sakamoto\\
  Kyushu Institute of Technology\\
  680-4 Kawazu, Iizuka, Fukuoka 820-8502, Japan \\
  \texttt{hiroshi@donald.ai.kyutech.ac.jp} \\
}
 \newtheorem{theorem}{\bf Theorem}
 \newtheorem{definition}{Definition}
 \newtheorem{proof}{Proof}
\begin{document}
\maketitle
\begin{abstract}
We consider an efficient two-party protocol for securely computing the similarity of strings
w.r.t. an extended edit distance measure.
Here, two parties possessing strings $x$ and $y$, respectively,
want to jointly compute an approximate value for $\mathrm{EDM}(x,y)$, 
the minimum number of edit operations including substring moves needed to transform $x$ into $y$,
without revealing any private information.
Recently, the first secure two-party protocol for this was proposed, based on homomorphic encryption,
but this approach is not suitable for long strings due to its high communication and round complexities.
In this paper, we propose an improved algorithm
that significantly reduces the round complexity without sacrificing its cryptographic strength.
We examine the performance of our algorithm for DNA sequences compared to previous one.
\end{abstract}

\section{Introduction}

\subsection{Motivation}
As the number of strings containing personal information has increased, 
privacy-preserving computation has become more and more important.
Secure computation based on public key encryption is one of the great achievements of modern cryptography,
as it enables untrusted parties to compute a function based on 
their private inputs while revealing nothing but the result.

In addition, edit distance is a well-established metric for measuring the similarity or dissimilarity of two strings.
The rapid progress of gene sequencing technology has expanded the range of edit distance applications
to include personalized genomic medicine, disease diagnosis, and preventive treatment (for example, see~\cite{Akgun2015}).
A person's genome is, however, ultimately individual information that 
uniquely identifies its owner, so the parties involved should not share their personal genomic data as plaintext.

Thus, we consider a secure multi-party edit distance computation based on the public key encryption model.
Here, untrusted two parties generating their own public and private keys 
have strings $x$ and $y$, respectively, and want to jointly compute $f(x,y)$
for a given metric $f$ without revealing anything about their individual strings.

\subsection{Related Work}
Homomorphic encryption (HE) based on the public key encryption model is an emerging technique 
that is being used for secure multi-party computation.
The Paillier encryption system~\cite{Paillier1999} possesses additive homomorphism,
enabling us to perform additive operations on two encrypted integers {\em without decryption}.
This means that parties can jointly compute the encrypted value $E(x+y)$ directly based only on
two encrypted integers $E(x)$ and $E(y)$.
HE is also {\em probabilistic}, i.e.,
an adversary can hardly predicts $x$ given $E(x)$, even if they can observe some number of $(x', E(x'))$ pairs for any $x'$.

By taking advantage of these characteristics, researchers have proposed several HE-based 
privacy-preserving protocols for computing the Levenshtein distance $d(x,y)$.
For example, Inan et al.~\cite{Inan2007} designed a three-party protocol where
two parties securely compute $d(x,y)$ by enlisting the help of a reliable third party.
Rane and Sun~\cite{Rane2010} then improved this three-party protocol to develop the first two-party one.

In this paper, we focus on an interesting metric called the {\em edit distance with moves} (EDM),
where we allow any substring to be moved with unit cost in addition to the standard Levenshtein distance operations.
Based on the EDM, we can find a set of approximately maximal common substrings appearing in two strings,
which can be used to detect plagiarism in documents or long repeated segments in DNA sequences.
As an example, consider two unambiguously similar strings $x=a^Nb^N$ and $y=b^Na^N$,
which can be transformed into each other by a single move.
Whereas the exact EDM is simply $\mathrm{EDM}(x,y)=1$,
the Levenshtein distance has the undesirable value $d(x,y)=2N$.
The $n$-gram distance is preferable to the Levenshtein's in this case,
but it requires huge time/space complexity depending on $N$.

Although computing $\mathrm{EDM} (x,y)$ is NP-hard~\cite{Shapira07},
Cormode and Muthukrishnan~\cite{Cormode2007} were able to find an almost linear-time approximation algorithm for it.
Many techniques have been proposed for computing the EDM;
for example, Ganczorz et al.~\cite{Ganczorz2018} proposed a lightweight probabilistic algorithm.
In these algorithms, each string $x$ is transformed into a characteristic vector $v_x$ consisting of nonnegative integers
representing the frequencies of particular substrings of $x$.
For two strings $x$ and $y$, we then have the approximate distance guaranteeing 
$L_1(v_x,v_y) = O(\lg^*N\lg N )\mathrm{EDM}(x,y)$ 
for $N=|x|+|y|$.\footnote{In Appendix~A of~\cite{HSP}, the authors point out that there is a subtle flaw in the ESP algorithm~\cite{Cormode2007} that achieves this $O(\lg^*N\lg N )$ bound. However, this flaw can be remedied by an alternative algorithm called HSP~\cite{HSP}.}
Since $\lg^*N$ increases extremely slowly\footnote{$\lg^*N$ is the number of times the logarithm function $\lg$ must be iteratively applied to $N$ until the result is at most 1.},
we employ $L_1(v_x,v_y)$ as a reasonable approximation to $\mathrm{EDM}(x,y)$.

Recently, Nakagawa et al. proposed 
the first secure two-party protocol for EDM (sEDM)~\cite{Nakagawa2018} based on HE,
but, their algorithm suffers from a bottleneck during the step where the parties construct a shared labeling scheme.
This motivated us to improve the previous algorithm to make it easier to use in practice.

\subsection{Our Contribution}

\begin{table*}[t]
\begin{center}
\caption{
Comparison of the communication and round complexities of secure EDM computation models.
Here, $N$ is the total length of both parties' input strings,
$n$ is the number of characteristic substrings determining the approximate EDM, and
$m$ is the range of the rolling hash $H(\cdot)$ for the substrings satisfying $m>n$.
``Naive'' is the baseline method that uses $H(\cdot)$ as the labeling function for the characteristic substrings.
In this table, we omit the security parameter or the unit cost of encryption and decryption
because the models use a same key length (e.g., 256-bit).
} 
\label{tab1}
\begin{tabular}{l|ll}
\hline
\qquad Method \qquad\qquad & \qquad Communication \qquad\qquad & Round \qquad\qquad \\ \hline
\qquad Ours & \qquad $O(n\lg n + m)$ & $O(1)$ \\
\qquad Naive & \qquad $O(m\lg m)$ & $O(1)$ \\ 
\qquad sEDM~\cite{Nakagawa2018} & \qquad $O(n\lg n)$ & $O(\lg N)$ \\ \hline
\end{tabular}
\end{center}
\end{table*}

The complexities of our algorithm and related ones are summarized in Table~\ref{tab1}.
Computing the approximate EDM involves two phases: the shared labeling of characteristic substrings (Phase 1) and 
the $L_1$-distance computation of characteristic vectors (Phase 2).
First, we outline those phases below.
Let the parties have strings $x$ and $y$, respectively. 
In the offline case (i.e., there is no need for privacy-preserving communication), 
they construct the respective parsing trees $T_x$ and $T_y$ by the bottom-up parsing called ESP~\cite{Cormode2007} where
the node labels must be {\em consistent}, i.e., two labels are equal if they correspond to the same substring.
In such an ESP tree, a substring derived by an internal node is called a characteristic substring.
In a privacy-preserving model, the two parties should jointly compute such consistent labels without revealing whether or not 
a characteristic substring is common to both of them (Phase 1). 
After computing all the labels in $T_x$ and $T_y$, they jointly compute the $L_1$-distance
of two characteristic vectors consisting the frequencies of all labels in $T_x$ and $T_y$ (Phase 2).

As reported in~\cite{Nakagawa2018}, in terms of usefulness, a bottleneck exists in Phase 1.
The task is to design a bijection $f: X\cup Y \to \{1,2,\ldots, n\}$
where $X$ and $Y$ ($|X\cup Y|=n$) are the sets of characteristic substrings
for the parties, respectively.
Since $X$ and $Y$ are computable without communication,
the goal is to jointly compute $f(w)$ for any $w\in X$ without revealing whether or not $w\in Y$.
Here, this problem is closely related to the private set operation (PSO) where
parties possessing their private sets want to obtain the results for several set operations, e.g., intersection or union.
Applying the Bloom filter~\cite{Bloom1970} and HE techniques, various protocols for PSO have been proposed~\cite{Kissner2005,Blanton2012,Davidson2017}.
However, these protocols cannot be directly applicable to our problem because
these protocols require at least three parties for the security constrained.
Thus, we propose a novel secure two-party protocol for Phase 1.

As shown in Table~\ref{tab1}, 
we eliminate the $O(\lg N)$ round complexity using the proposed method that can achieve $O(1)$
round complexity while maintaining the efficiency of communication complexity.
Furthermore, we examine the practical performance of our algorithm for real DNA sequences.

\section{Preliminaries}

\subsection{EDM}

\begin{figure*}[t]
\begin{center}
\includegraphics[width=0.75\textwidth]{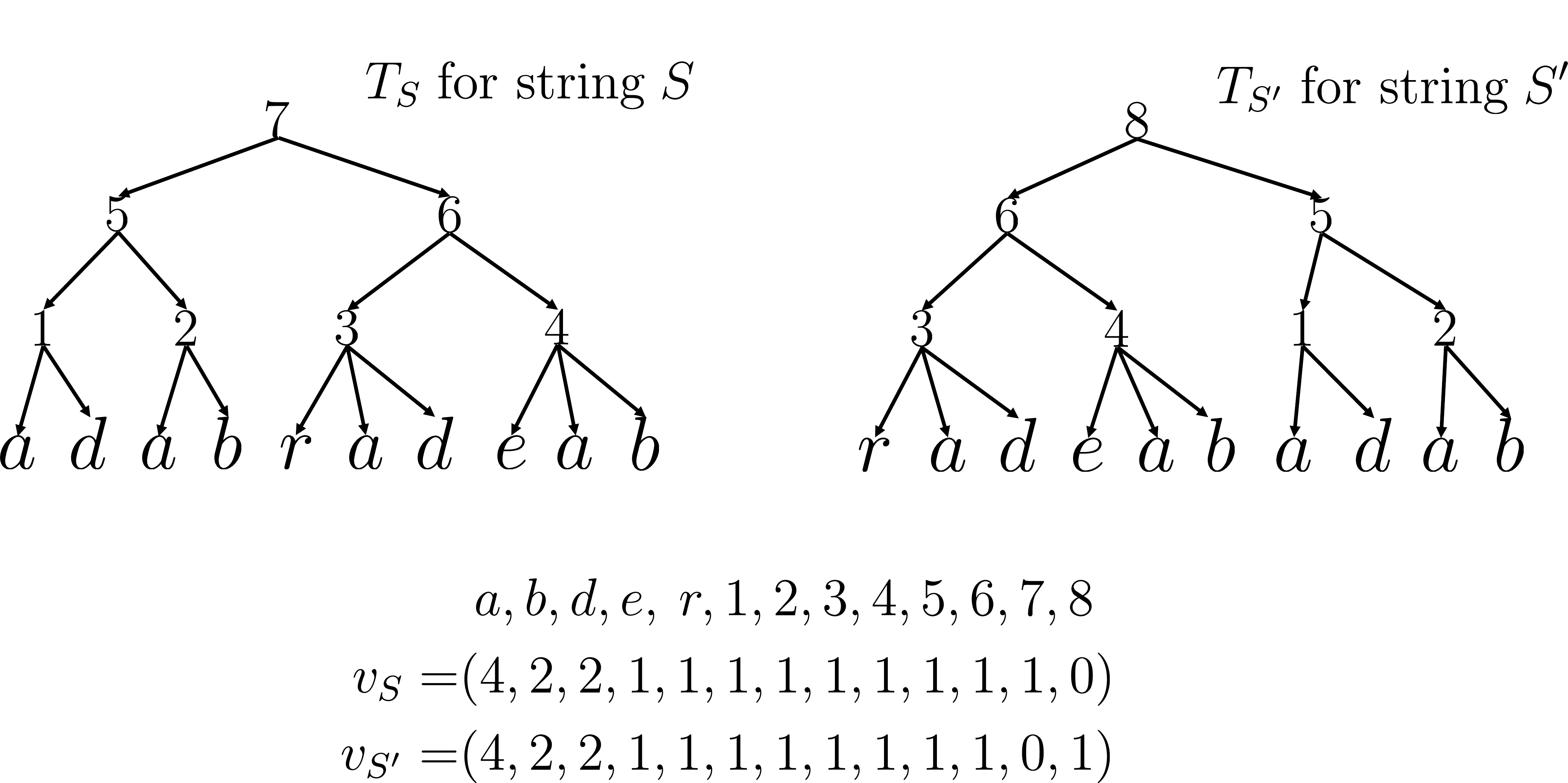}
\end{center}
\caption{Example of approximate EDM computation for strings $S$ and $S'$.
Here, the ESP trees $T_S$ and $T_{S'}$ are constructed by applying a shared labeling scheme for all internal nodes.
After constructing $T_S$ and $T_{S'}$, the corresponding characteristic vectors $v_S$ and $v_{S'}$ are computed offline. 
Finally, the exact $\mathrm{EDM}(S,S')$ is approximated by $L_1(v_S,v_{S'})=2$.}
\label{fig:esp}
\end{figure*}

Let $\Sigma$ be a finite set of alphabet symbols and $\Sigma^*$ be its closure.
Denote the set of all strings of the length $N$ by $\Sigma^N$ and
the length of a string $S$ by $|S|$.
For simplicity, we also denote the cardinality of a set $U$ by $|U|$.
In addition, $S[i]$ denotes the $i$-th symbol of $S$ and $S[i..j]$ denotes the substring $S[i]S[i+1]\cdots S[j]$.

Next, we define $\mathrm{EDM}(S,S')$ as the length of the shortest sequence of edit operations
that transforms $S$ into $S'$, where the permitted operations (each with unit cost) are inserting, deleting, or renaming
one symbol at any position and moving an arbitrary substring.
Unfortunately, as Theorem~\ref{ESP-1} shows,
computing $\mathrm{EDM}(S,S')$ is NP-hard even if renaming operations are not allowed~\cite{Shapira07},
so we focus on an approximation algorithm for EDM, called ESP (Edit-Sensitive Parsing)~\cite{Cormode2007}.

\begin{theorem}{\rm (Shapira and Storer \cite{Shapira07})}\label{ESP-1}
Determining $\mathrm{EDM}(x,y)$ is NP-hard 
even if only three unit-cost operations namely inserting or deleting a character and moving a substring are allowed.
\end{theorem}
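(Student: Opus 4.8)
The plan is to prove NP-hardness by a polynomial-time reduction from a strongly NP-hard packing problem; a convenient source is \emph{3-Partition}: given $3m$ positive integers $a_1,\dots,a_{3m}$ with $\sum_i a_i = mB$ and $B/4 < a_i < B/2$ for every $i$, decide whether the multiset splits into $m$ triples each summing to $B$. Strong NP-hardness is what makes this usable: it lets us write the numbers in unary, which matches the fact that the cost measure here counts elementary operations on strings whose lengths we may take proportional to the $a_i$.

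First I would build the source string $x$ and the target string $y$. In $x$, each integer $a_i$ becomes an \emph{item block} of length $\Theta(a_i)$ wrapped in a pair of delimiter symbols $\langle i\rangle\cdots\langle i\rangle$ private to that block, so the block is individually identifiable and can only be relocated as a whole without extra cost; the $3m$ item blocks (with suitable separators) are concatenated to form $x$. In $y$, I would lay out $m$ \emph{bins}, each an interior region of length $\Theta(B)$ bracketed by its own private delimiters, so that ``filling a bin'' means depositing item blocks whose interior lengths sum to exactly $B$. If the instance is a yes-instance, the intended script performs $3m$ substring moves --- one per item block, sending the three blocks of a chosen triple into the same bin --- together with $c(m)$ insertions and deletions, depending only on $m$, for delimiter bookkeeping, giving a target budget $k = 3m + c(m)$ that does not depend on the particular packing.

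The heart of the argument, and the step I expect to be hardest, is the converse: if the instance is a no-instance then every edit script costs more than $k$. I would prove this via a normal-form lemma. An exchange argument should show that an optimal script may be assumed never to cut an item block --- because $B/4 < a_i < B/2$, fragments of two distinct blocks cannot combine to imitate a third, and the private delimiters keep fragments from being reused cheaply, so splitting and scattering a block never beats a single whole-block move. Given that normal form, a counting argument on the delimiter occurrences the script must create or destroy (equivalently, on the breakpoints it introduces between item blocks) forces any script of cost at most $k$ to realize an assignment of whole item blocks to bins in which every bin receives interior length exactly $B$; since $B/4 < a_i < B/2$, every bin then contains exactly three blocks, i.e. a valid 3-partition, contradicting the hypothesis. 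The yes-direction simply reuses the cost-$k$ script above.

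Two further points need care. The theorem demands hardness even without the rename operation, so every gadget must be realized using only insertion, deletion, and substring move; this costs no generality (a rename is a deletion followed by an insertion), but the delimiter and padding design must not covertly rely on renaming, and $c(m)$ must be recomputed in the three-operation model. One must also confirm the reduction is polynomial: with unary encodings $|x|$ and $|y|$ are polynomial in the strongly-NP-hard instance size and the delimiter alphabet has size $O(m)$. The real delicacy --- and where I would spend the most effort --- is choosing the bin and item gadgets so that $y$ depends only on the \emph{3-Partition} instance and never on its solution, while still structurally ruling out ``shortcut'' scripts that would beat the budget without respecting the packing; I would arrange the private delimiters so that every such shortcut is forced either to cut an item block or to leave a delimiter occurrence unmatched, charging each irregularity to a dedicated symbol so that the total strictly exceeds $k$.
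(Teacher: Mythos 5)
The paper does not prove this statement at all: it is quoted from Shapira and Storer~\cite{Shapira07}, whose argument is a reduction from a strongly NP-hard packing problem (bin packing), so your overall strategy --- reduce from 3-Partition/bin packing with unary-encoded sizes, item gadgets in $x$, bin gadgets in $y$, and a budget counting one move per item plus bookkeeping --- is the right family of ideas. However, your concrete gadget design has a genuine gap. You make the item delimiters $\langle i\rangle$ private to $x$ and the bin delimiters private to $y$, while the block and bin interiors must be built from a common symbol (they have to be, since $y$ may not depend on the solution and bin interiors must be assembled from item interiors). Then every delimiter of $x$ must be deleted and every delimiter of $y$ must be inserted in \emph{any} script, including the intended one, and these unavoidable operations already let an adversarial script do all the work: delete the $O(m)$ item delimiters, obtaining one long unary run of length $mB$, and insert the $O(m)$ bin delimiters at arbitrary positions, thereby ``packing'' the bins perfectly with zero moves, whether or not a 3-partition exists. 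This no-move script costs no more than your intended yes-script $k=3m+c(m)$ (indeed strictly less, since $c(m)$ already contains the same delimiter bookkeeping), so the reduction does not distinguish yes- from no-instances. For the same reason your normal-form lemma fails as stated: cutting an item block is free once its delimiters are being deleted anyway, and with unary interiors any fragments of the right total length imitate any block, so the condition $B/4<a_i<B/2$ gives no protection.

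The missing idea is precisely the one the actual reduction has to supply: the two strings must have (essentially) identical character content, with delimiters shared between $x$ and $y$, so that insertions and deletions cannot be used to split or merge runs for free; the budget is then set tightly enough that any script must realize the grouping by moves alone, and the lower bound for no-instances comes from counting breakpoints/block boundaries that moves can remove per operation, not from privacy of delimiters. You partially anticipate the difficulty yourself (``$y$ must not depend on the solution''), but the proposal does not resolve the resulting tension: generic interiors admit the cheap insert/delete shortcut above, while solution-specific interiors make $y$ unconstructible. As written, the hard direction of your reduction would fail, so the argument needs a substantially different gadget and budget design before it constitutes a proof.
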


To illustrate the ESP algorithm, we consider a string $S\in\Sigma^*$.
First, $S$ is deterministically partitioned into blocks as $S=s_1s_2\cdots s_k$ such that $2\leq |s_i|\leq 3$.
Here, we omit the details since partitions are determined based on $S$ alone, without communication.
Next, a {\em consistent} label $\ell(s_i)$ is assigned to each block $s_i$, where $\ell(x)=\ell(y)$ if $x=y$.
The resulting string $L=\ell(s_1)\cdots \ell(s_k)$ is then processed recursively until $|L|=1$.
Finally, a parsing tree $T_S$ is obtained for $S$,
which can be used to approximate $EDM(S,S')$ by applying the following result.

\begin{theorem}{\rm (Cormode and Muthukrishnan \cite{Cormode2007})}\label{ESP-2}
Let $T_S$ and $T_{S'}$ be consistently labeled ESP trees for $S,S'\in\Sigma^*$, and
let $v_S$ be the characteristic vector for $S$, where $v_S[k]$ is the frequency of label $k$ in $T_S$.
Then, 
\[\frac{\: 1\:}{2}\mathrm{ EDM}(S,S') \leq L_1(v_S,v_{S'}) = O(\lg^*N\lg N)\mathrm{EDM}(S,S')\] 
for 
 $L_1(v_S,v_{S'})= \displaystyle \sum_{i=1}^k|v_{S}[i]-v_{S'}[i]|$.
\end{theorem}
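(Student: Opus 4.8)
The statement bundles two inequalities, and I would prove them separately: the lower bound $\tfrac12\mathrm{EDM}(S,S') \le L_1(v_S,v_{S'})$ (equivalently $\mathrm{EDM}(S,S') \le 2L_1(v_S,v_{S'})$) and the upper bound $L_1(v_S,v_{S'}) = O(\lg^*N\lg N)\,\mathrm{EDM}(S,S')$. The common backbone is the consistency of the labeling: a label $c$ determines a unique substring $\mathrm{exp}(c)$ (the yield of any node carrying $c$), so $v_S$ and $v_{S'}$ are really the occurrence-multisets of phrases in the two parse trees, and $L_1(v_S,v_{S'})$ is the size of the symmetric difference of these multisets. I would fix this dictionary-of-phrases viewpoint first, since both directions are counting arguments over it.

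For the upper bound I would exploit the defining edit-sensitivity of ESP: a single unit-cost operation on $S$ (insert, delete, rename, or move) perturbs the deterministic blocking only inside a bounded neighborhood of the affected position, and this locality propagates up the $O(\lg N)$ levels of the tree while expanding by only an $O(\lg^* N)$ factor per level through the alphabet-reduction / landmark step. I would state this as a lemma: editing $S$ by one operation changes at most $O(\lg^* N\lg N)$ node labels, hence alters $L_1$ against a fixed target by at most that amount. Telescoping along an optimal edit sequence of length $\mathrm{EDM}(S,S')$ and using the triangle inequality for $L_1$ then yields $L_1(v_S,v_{S'}) \le O(\lg^* N\lg N)\,\mathrm{EDM}(S,S')$.

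For the lower bound I would give a \emph{constructive} transformation rather than a perturbation argument --- this is the point my earlier sketch got backwards, since bounding the change per edit only proves the reverse inequality. Concretely, I would use the labels common to both trees as reusable building blocks: occurrences of a shared label $c$ supply copies of $\mathrm{exp}(c)$ that can be repositioned by a single substring move, while occurrences counted only by the symmetric difference must be created or destroyed. Working top-down through $T_{S'}$ and charging each unmatched node-occurrence to a constant number of operations (a move to fetch the right material plus at most one insert/delete/rename to patch a boundary), I would exhibit an explicit sequence of at most $2L_1(v_S,v_{S'})$ operations taking $S$ to $S'$, which is exactly $\mathrm{EDM}(S,S') \le 2L_1(v_S,v_{S'})$.

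I expect the main obstacle to be the edit-sensitivity lemma underlying the upper bound: one must verify that ESP's deterministic partition keeps the set of recomputed blocks within an $O(\lg^* N)$-sized window at every level despite the recursive relabeling, and it is precisely this step where the footnote's ``subtle flaw'' in the original analysis lives. The constructive lower bound is comparatively routine but still requires care in the accounting, so that overlapping or nested common phrases are not charged twice and the constant stays at $2$.
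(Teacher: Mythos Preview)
The paper does not prove this theorem at all: it is stated as a citation of Cormode and Muthukrishnan~\cite{Cormode2007} and used as a black box, with no accompanying argument in the present text. Consequently there is no ``paper's own proof'' to compare your proposal against.

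That said, your sketch is a faithful outline of the original Cormode--Muthukrishnan argument: the upper bound is indeed obtained by an edit-sensitivity lemma (one unit edit perturbs $O(\lg^* N)$ blocks per level over $O(\lg N)$ levels) telescoped along an optimal edit sequence, and the lower bound is obtained constructively by reusing shared phrases via substring moves. Your caveat about the ``subtle flaw'' in the $O(\lg^* N)$ window analysis is also apt and matches the paper's own footnote pointing to the HSP fix. So as a self-contained proof your plan is sound, but be aware that for the purposes of this paper no proof is expected---only the citation.
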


Figure~\ref{fig:esp} shows an example of applying consistent labeling
to the trees $T_S$ and $T_{S'}$, together with the resulting characteristic vectors.
The strings $S$ and $S'$ are partitioned offline, so the problem of 
preserving privacy reduces to designing secure protocol for creating consistent labels and
computing the $L_1$-distance between the trees.
In this study, we propose a novel HE-based algorithm for the former problem.

\subsection{Homomorphic Encryption}

\newcommand\G{\mathbb G}
\newcommand\Zp[1]{\mathbb Z_{#1}}

Here, we briefly review the framework of homomorphic encryption.
Let $(pk,sk)$ be a key pair for  a public-key encryption scheme, and let
$E_{pk}(x)$ be the encrypted values of message $x$
and $D_{sk}(C)$ be the decrypted value of ciphertext $C$, respectively.
We say that the encryption scheme is {\em additively homomorphic} if we have the properties:
(1) There is an operation $h_+(\cdot,\cdot)$ for $E_{pk}(x)$ and $E_{pk}(y)$ such that
$D_{sk}(h_+(E_{pk}(x),E_{pk}(y)))=x+y$.
(2) For any $r$, we can compute the scalar multiplication such that $D_{sk}(r\cdot E_{pk}(x))=r\cdot x$.

An additive homomorphic encryption scheme allowing sufficient number of these operations is called an additive HE\footnote{In general, the number of applicable operations over ciphertexts is bounded by the size of $(pk,sk)$.}.
Paillier's encryption scheme~\cite{Paillier1999} is the first secure additive HE,
but we cannot evaluate many functions by only the additive homomorphism and scalar multiplication.

On the other hand, the multiplication 
$D_{sk}(h_\times(E_{pk}(x), E_{pk}(y)))=x\cdot y$ is another important homomorphism.
If we allow both additive and multiplicative homomorphism as well as scalar multiplication (called a
fully homomorphic encryption, FHE~\cite{Gentry2009} for short),
it follows that we can perform any arithmetic operation on ciphertexts.
For example, if we can use sufficiently number of additive operations and a single multiplicative operation over ciphertexts,
we obtain the inner-product of two encrypted vectors.

However, there is a tradeoff between the available homomorphic operations and their computational cost.
To avoid this difficulty, we focus on the Leveled HE (LHE) where the number of homomorphic multiplications is restricted beforehand.
In particular, {\em two-level} HE (Additive HE that allows a single homomorphic multiplication) has attracted a great deal of attention.
BGN encryption system is the first two-level HE invented by Boneh et al.~\cite{Boneh2005}
assuming a single multiplication and sufficient numbers of additions.
Using the BGN, we can securely evaluate formulas in disjunctive normal form (DNF).
After this pioneering study, many practical two-level HE protocols have been proposed~\cite{Freeman2010,Herold2014,Catalano2015,Attrapadung2018}.

For the EDM computation, 
Nakagawa et al.~\cite{Nakagawa2018} introduced an algorithm for computing the EDM based on two-level HE,
but their algorithm is very slow for large strings.
So, we propose another novel secure computation of EDM for large strings based on the faster two-level HE
proposed by Attrapadung et al.~\cite{Attrapadung2018}.
As far as we know, there are no secure two-party protocols for the EDM computation that only use additive homomorphic property. 
Whether we can compute EDM on a two-party protocol based on additive HE only is an interesting question.

\section{Two-Party Secure Consistent Labeling}
\label{sec:two-party-secure}

\subsection{Hash Function}
\label{sec:hash-function}

In our protocol, two parties, ${\cal A}$ (Alice) and ${\cal B}$ (Bob), 
agree to use a shared hash function to assign tentative labels to their ESP trees.
First, we consider the conditions that a hash function should satisfy.
One desirable property of any hash function used for our algorithm is that
its hash value should be uniformly distributed, and we assume this is true in this paper. 
In addition, the function involves a parameter $m$ that represents the number of possible hash values, 
and this affects our algorithm's computational complexity, as well as the hash function's conflict resistance and one-wayness.

\begin{description}
\item[Computational complexity.]
In our algorithm, $\mathcal A$ encrypts $m$ individual bits and sends the resulting $m$ ciphertexts to $\mathcal B$,
who then adds or multiplies pairs of ciphertexts.
Thus, our first requirement is that $m$ be small enough for these computation to be performed efficiently.
\item[Conflict resistance.]
  The hash function's conflict resistance affects the accuracy of the edit distance estimated by Algorithm~1.
  We say that a conflict occurs when
  two distinct texts happen to be hashed to the same value.
  Very roughly,  
  if conflicts occur with probability $p$, 
  the average proportional error in the edit distances is also $O(p)$.
  That said,
  we can create conditions where the probability of conflict is below some threshold $p$, as follows.
  Let $n$ denote the number of labels (hash values) computed by the algorithm.
  To avoid conflicts, 
  $n$ must be sufficiently small relative to $m$.
  After computing $n$ hash values at random, we can estimate 
  the probability of at least one conflict having occurred as follows.
  \begin{align*}
    \Pr[\text{Conflicts}]
    & = 1 - \Pr[\text{No conflict}] \\
    & = 1 - \left(1 - \frac 1m\right) \left(1 - \frac 2m\right) \cdots
        \left(1 - \frac {n-1}m\right) \\
    & \approx 1- \exp\left(-\frac{n^2}{2m}\right).
  \end{align*}
  Thus,
  to ensure this probability is below a given (small) threshold $p$, we require
  \begin{equation}\label{eq:1}
    n \le - \ln\left( 1 - p\right) \sqrt{2m}.
  \end{equation}
\item[One-wayness.]
  One-wayness is important for security.
  For example,
  if $\mathcal A$ happens to have two texts with hash values $h$ and $h+2$, respectively, 
  but does not have a text with hash value $h + 1$, 
  then $\mathcal A$ would know that $\mathcal B$ has a text with hash value $h+1$.
  If the hash function is not one-way,
  $\mathcal A$ could then guess the next.   
  A function is theoretically one-way if it is computationally difficult
  to find an $x$ such that $H(x) = y$ given $y$ with non-negligible probability. 
  Given an ideal hash oracle that selects $H(x)$ uniformly at random from $\{1, \dots, m\}$ for any $x$,
  the probability of any guess $x'$ being correct for an unknown $x$ is exactly $\frac 1m$.
  Thus,
  to ensure the function is effectively one-way,
  $m$ must be sufficiently large.  
\end{description}

It is known that, 
if the problem of finding a pair of distinct inputs
that hash to the same value is computationally intractable (strong conflict-resistance),
the hash function is also one-way.
For cryptographic hash algorithms, such as, MD5 and SHA-1, 
strong conflict-resistance is required and the conflict probability must be negligibly small, 
such as less than $\frac 1{2^{100}}$.
This indeed requires that $m$ be very large ($2^{128}$ and $2^{160}$ for MD5 and SHA-1, respectively),
so it would be computationally unfeasible to use a cryptographic hash function in our algorithm.

Nevertheless, if we relax the requirements somewhat, 
it is not difficult in practice to select an $m$ that meets our needs.
For example, if $n = 100$ and $p = 0.05$, then $m = 1,900,416$ satisfies inequality~\eqref{eq:1}.
Generating and transmitting so many ciphertexts would be time-consuming but still feasible,
and would reduce the probability of breaking one-wayness to a very low value. 

We should, however, note that
these requirements on $m$ are merely necessary conditions for conflict resistance and one-wayness.
Even under these conditions, using a well-designed hash algorithm is still crucial.  

The rolling hash algorithm~\cite{Karp87}, 
defined as follows based on two parameters $m$ and $b$,
is expected to be sufficiently conflict-resistant and one-way.
For a given input $x = (s_1, \dots, s_\ell) \in [0, b)^\ell$,
the hash function is given by $H(x) = \sum_{i=1}^\ell s_i \cdot b^{\ell - i} \bmod m$.
This algorithm has the useful advantage 
that we can compute $H(xy)$ from $H(x)$ and $H(y)$ in constant time,
independent of the lengths of $x$ and $y$.  

\subsection{Algorithm}

\begin{algorithm}   
\caption{for consistently labeling $T_{\cal A}$ and $T_{\cal B}$}                    
\label{algo1}                          
\begin{algorithmic}  
\STATE
\STATE {\bf Preprocessing (tentative labeling):} 
Parties $\cal A$ and $\cal B$ agree to use a shared hash function $H$ with a range $\{0, \dots, m\}$,
where $m$ is chosen so as to meet the requirements given in Section~\ref{sec:hash-function}.
Both parties compute the ESP trees $T_{\cal A}$ and $T_{\cal B}$ corresponding to their respective strings offline,
then assign labels $H(w)$ to all the nodes in their trees based on their computed blocks $w$.
Now, parties $\cal A$ and $\cal B$ have tentative label sets $[T_{\cal A}], [T_{\cal B}] \subseteq\{0,\ldots, m\}$, respectively.
\STATE
\STATE {\bf Goal:} Change all the labels using a bijection: $[T_{\cal A}]\cup [T_{\cal B}]\to \{1,\ldots,n\}$ 
without either party having to reveal anything about their private strings.
\STATE
\STATE {\bf Notations:} $E_\mathcal{A}(x)$ denotes the ciphertext of a message $x$
encrypted by a two-level HE with $\mathcal A$'s public key.
\STATE
\STATE {\bf Sharing a dictionary:}
\STATE {\bf Step 1:}  Party ${\cal A}$ computes the bit vector ${\bf X}[1..m]$ such that
${\bf X}[\ell]=1$ iff $\ell\in [T_{\cal A}]$.
Similarly, party ${\cal B}$ computes ${\bf Y}[1..m]$ such that ${\bf Y}[\ell]=1$ iff $\ell\in [T_{\cal B}]$.
\STATE {\bf Step 2:} ${\cal A}$ sends $E_{\cal A}({\bf X})$ to ${\cal B}$ 
and ${\cal B}$ sends $E_{\cal B}({\bf Y})$ to ${\cal A}$.
\STATE {\bf Step 3:} ${\cal B}$ computes 
$
\left(E_{\cal A}({\bf X})\oplus E_{\cal A}({\bf Y})\right) \oplus
\left(E_{\cal A}({\bf X})\cdot E_{\cal A}({\bf Y})\right) = E_{\cal A}({\bf X}\cup{\bf Y})
$
and 
${\cal A}$ computes 
$
\left(E_{\cal B}({\bf X})\oplus E_{\cal B}({\bf Y})\right) \oplus
\left(E_{\cal B}({\bf X})\cdot E_{\cal B}({\bf Y})\right) = E_{\cal B}({\bf X}\cup{\bf Y})
$.
\STATE
\STATE \textbf{Relabeling $[T_\mathcal{A}]$ using $E_{\cal A}({\bf X}\cup{\bf Y})$}
($[T_\mathcal{B}]$ is relabeling in the symmetrical way)
\STATE {\bf Step 4:} ${\cal A}$ computes $E_{\cal B}(L_{\ell}) = E_{\cal B}\left(\displaystyle \sum_{i=1}^\ell ({\bf X}\cup{\bf Y}) [i]\right)$ for all $\ell\in [T_{\cal A}]$.
\STATE {\bf Step 5:} ${\cal A}$ sends all $E_{\cal B}(L_\ell + r_\ell)$ to ${\cal B}$
choosing $r_\ell$ uniformly at random from $\mathbb N$.
\STATE {\bf Step 6:} ${\cal B}$ decrypts all $L_\ell + r_\ell$ and sends them back to ${\cal A}$.
\STATE {\bf Step 7:} ${\cal A}$ recreates $L_\ell \in \{1,\ldots,n\}$ for all $\ell\in [T_{\cal A}]$
by subtracting $r_\ell$.
\STATE
\STATE
\end{algorithmic}
\end{algorithm}

Two parties ${\cal A}$ and ${\cal B}$ have strings $S_{\cal A}$ and $S_{\cal B}$, respectively.
First, they compute the corresponding ESP trees $T_{\cal A}$ and $T_{\cal B}$ offline,
using the rolling hash function to generate (tentative) consistent labels,
thereby defining a set $X\subseteq\{0,1,\ldots, m\}$ of $n$ different labels in $T_{\cal A}$ and $T_{\cal B}$
with a fixed $m$.
The algorithm's goal is to securely relabel $X$ using by a bijection: $X\to\{1,2,\ldots,n\}$, 
as described in Algorithm~\ref{algo1}, where ${\cal A}$ and ${\cal B}$ have their own public and private keys.

In our algorithm, we assume a FHE (LHE) system supporting both additive and multiplicative operations.
Since these operations are usually implemented by AND ($\cdot$) and XOR ($\oplus$) logic gates (e.g.~\cite{Brakerski2012}),
we introduce several notations using such gates as follows.
First, $E_{\cal A}(x)$ denotes the ciphertext generated by encrypting plaintext $x$ with ${\cal A}$'s public key,
and $E_{\cal A}(x,y,z)$ is an abbreviation for the vector $(E_{\cal A}(x),E_{\cal A}(y),E_{\cal A}(z))$.
Here, $E_{\cal A}(x,y,z)\cdot E_{\cal A}(a,b,c)$ denotes $(E_{\cal A}(x\cdot a),E_{\cal A}(y\cdot b),E_{\cal A}(z\cdot c))$
and 
$E_{\cal A}(x,y,z)\oplus E_{\cal A}(a,b,c)$ denotes $(E_{\cal A}(x\oplus a),E_{\cal A}(y\oplus b),E_{\cal A}(z\oplus c))$
for each bits $x,y,z,a,b,c\in\{0,1\}$.
Using these notations, we describe the proposed protocol in Algorithm~\ref{algo1}.

Next, we define our protocol's security based on a model where we assume that 
both parties are {\em semi-honest}, i.e.,
corrupted parties merely cooperate to gather information out of the protocol, 
but do not deviate from the protocol specification.
The security is defined as follows.

\begin{definition}{\rm (Semi-honest security~\cite{Goldreich2004})}
A protocol is secure against semi-honest adversaries if
each party's observation of the protocol can be simulated using only the input they hold
and the output that they receive from the protocol.
\end{definition}

Intuitively, this definition tells us that a corrupted party is unable to learn any extra
information that cannot be derived from the input and output explicitly
(For details, see~\cite{Goldreich2004}).
Under this assumption, 
since the algorithm is symmetric with respect to $\mathcal A$ and $\mathcal B$,
the following theorem proves our algorithm's security against semi-honest adversaries.

\begin{theorem}\label{th1}\rm
  Let $[T_{\cal A}]$ be the set of labels appearing in $T_{\cal A}$.
  The only knowledge that a semi-honest $\mathcal A$ can gain by executing Algorithm~\ref{algo1}
  is the distribution of the labels $\{L_\ell \mid \ell \in [T_\mathcal{A}]\}$ over $[1, \dots, n]$.
\end{theorem}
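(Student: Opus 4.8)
The plan is to give a standard simulation-based proof in the semi-honest model: I would pin down $\mathcal{A}$'s view in a real run of Algorithm~\ref{algo1}, build a probabilistic polynomial-time simulator $\mathrm{Sim}_\mathcal{A}$ that receives \emph{only} $\mathcal{A}$'s input $[T_\mathcal{A}]$ and the claimed protocol output $\{L_\ell \mid \ell \in [T_\mathcal{A}]\}$, and show the simulated view is computationally indistinguishable from the real one. First I would enumerate what $\mathcal{A}$ actually receives: in Step~2 the ciphertext vector $E_\mathcal{B}(\mathbf{Y})$; in Step~6 the masked prefix sums $\{L_\ell+r_\ell\}_{\ell\in[T_\mathcal{A}]}$ (where $r_\ell$ is drawn by $\mathcal{A}$ itself); and in the symmetric relabeling of $[T_\mathcal{B}]$ a list of ciphertexts $E_\mathcal{A}(L'_\ell+r'_\ell)$ that $\mathcal{A}$ decrypts to $L'_\ell+r'_\ell$ (where $r'_\ell$ is drawn by $\mathcal{B}$). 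Every message $\mathcal{A}$ sends — $E_\mathcal{A}(\mathbf{X})$ in Step~2, $E_\mathcal{B}(L_\ell+r_\ell)$ in Step~5, and the decryptions it returns in the symmetric phase — is a deterministic function of $\mathcal{A}$'s input, its random tape (including its key pair $(pk_\mathcal{A},sk_\mathcal{A})$ and the masks), and the messages already received, so these add nothing to the view and the simulator need only recompute them honestly.

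Next I would describe $\mathrm{Sim}_\mathcal{A}$ and argue indistinguishability. It runs $\mathcal{A}$'s honest local steps on $[T_\mathcal{A}]$ with fresh randomness to form $\mathbf{X}$ and $E_\mathcal{A}(\mathbf{X})$; it fakes the Step~2 message by outputting $E_\mathcal{B}(\mathbf{0})$, i.e.\ coordinate-wise encryptions of $0$ under the public key $pk_\mathcal{B}$; for Step~6 it samples masks $r_\ell$ as in the protocol and outputs $L_\ell+r_\ell$ using the supplied output values $L_\ell$ — a perfect simulation, since in the real run $\mathcal{A}$ both picks $r_\ell$ and learns the true $L_\ell$; and for the symmetric phase it outputs, for the appropriate count of labels, encryptions under $pk_\mathcal{A}$ of values sampled uniformly from the mask domain. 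Reading ``uniformly at random from $\mathbb{N}$'' as uniform over a range super-polynomially larger than $n$, each real value $L'_\ell+r'_\ell$ is statistically close to such a uniform sample, so this last part is statistically faithful; the same observation shows the Step~6 messages statistically hide $L_\ell$ from $\mathcal{B}$ in the symmetric argument. The only component on which the real and simulated views differ non-statistically is the Step~2 ciphertext vector, so indistinguishability reduces to the IND-CPA (semantic) security of the two-level HE scheme of Attrapadung et al.~\cite{Attrapadung2018} via a hybrid over the $m$ coordinates; the reduction goes through because $\mathcal{A}$ never holds $sk_\mathcal{B}$. Composing the statistical and computational steps yields that $\mathcal{A}$'s view is simulatable from $([T_\mathcal{A}],\{L_\ell\})$ alone, i.e.\ the only extra knowledge $\mathcal{A}$ gains is the distribution of $\{L_\ell\}$ over $[1,\dots,n]$, and symmetry of the protocol gives the analogous statement for $\mathcal{B}$.

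The step I expect to be the main obstacle is the bookkeeping around the output rather than any single calculation. The values $L_\ell=\sum_{i\le \ell}(\mathbf{X}\cup\mathbf{Y})[i]$ visibly encode aggregate information about $\mathbf{Y}$, so one must argue carefully that handing them to the simulator (and to the IND-CPA reduction) is legitimate: it is, because the security definition entitles the simulator to the party's protocol output, and the reduction can embed its challenge ciphertext in the Step~2 slot while treating $\{L_\ell\}$ as externally supplied, so no circular dependence on $\mathbf{Y}$ arises. The secondary subtlety is making the mask distribution precise — replacing $\mathbb{N}$ by a finite domain large enough that $L'_\ell+r'_\ell$ statistically hides its summand — and noting that ancillary quantities such as $n$ and $|[T_\mathcal{B}]|$ are to be regarded as public setup or part of the output, not as leakage the simulator must conjure away.
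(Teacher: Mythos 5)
Your proof is correct and follows essentially the same route as the paper's: the paper's argument is precisely your enumeration of $\mathcal{A}$'s view --- offline preprocessing, incoming messages encrypted under $\mathcal{B}$'s public key, the output labels $L_\ell$ themselves, and the $\mathcal{B}$-masked values $L_\ell + r_\ell$ appearing uniform --- only stated informally, without the explicit simulator, the IND-CPA hybrid, or the finite-mask-domain refinement you supply. Your added care (treating the ``uniform over $\mathbb{N}$'' masks as uniform over a large finite range, and regarding $n$ and $|[T_{\cal B}]|$ as part of the output/setup, which the paper only addresses in the remark following the theorem) is a more rigorous rendering of the same argument rather than a different approach.
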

\begin{proof}
  First, the preprocessing phase gives $\cal A$ no new information,   since it is conducted offline.
  Second, the dictionary sharing phase does not provide any new knowledge either,
  since all the information that $\mathcal A$ receives from $\mathcal B$ is
  encrypted using $\mathcal B$'s public key.
  Third, when $\cal A$ is relabeling $[T_\mathcal{A}]$,
  they only receive $L_\ell$ for $\ell \in [T_\mathcal{A}]$.
  Finally, when $\cal B$ is relabeling $[T_\mathcal{B}]$,
  $\mathcal A$ knows $L_\ell + r_\ell$ for $\ell \in [T_\mathcal{B}]$,
  but the $r_\ell$ are secret random numbers that $\mathcal B$ has generated uniformly at random,
  and $\mathcal A$ cannot know their values.
  Hence, the $L_\ell + r_\ell$ are distributed uniformly at random from $\mathcal A$'s perspective.
\hspace{\fill}$\Box$
\end{proof}

Although $\mathcal A$ can guess $n$ as being either $\max\{L_\ell \mid \ell \in [T_\mathcal{A}]\}$ or 
a value just above this, and can obtain knowledge about $\cal B$'s labels by investigating 
$\{1, \dots, n\} \setminus \{L_\ell \mid \ell \in [T_\mathcal{A}]\}$,
since we have assumed that the hash function is (probabilistically) one-way, this does not give $\mathcal A$ 
any knowledge about $\cal B$'s text.

\begin{theorem}\label{th2}\rm
Algorithm~\ref{algo1} assigns consistent labels using the injection: $[T_{\cal A}]\cup [T_{\cal B}]\to \{1,2,\ldots,n\}$
without revealing the parties' private information.
Its round and communication complexities are $O(1)$ and $O(\alpha(n\lg n+m+rn))$, respectively,
where $n=|[T_{\cal A}]\cup[T_{\cal B}]|$, $m$ is the modulus of the rolling hash used for preprocessing,
$r=\max\{r_1,\ldots,r_n\}$ is the security parameter, and
$\alpha$ is the cost of executing a single encryption, decryption, or homomorphic operation.
\end{theorem}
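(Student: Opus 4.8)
The plan is to decompose the statement into four claims and handle them separately: (i) consistency of the assigned labels, (ii) that the induced map $[T_{\cal A}]\cup[T_{\cal B}]\to\{1,\ldots,n\}$ is an injection (hence, since both sides have cardinality $n$, a bijection), (iii) that no private information leaks, and (iv) the round and communication bounds. Claim (iii) I would dispose of immediately: Theorem~\ref{th1} already says a semi-honest $\mathcal A$ learns nothing beyond the distribution of its own final labels, and since Algorithm~\ref{algo1} is symmetric in $\mathcal A$ and $\mathcal B$ the same holds for $\mathcal B$; the residual remarks about $\mathcal A$ guessing $n$ from $\max_\ell L_\ell$ are exactly the ones already made after Theorem~\ref{th1} and hinge only on one-wayness of $H$. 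So the real content is (i), (ii) and (iv).

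For (i) and (ii) the key observation is that $L_\ell=\sum_{i=1}^{\ell}({\bf X}\cup{\bf Y})[i]$ is the \emph{rank} of position $\ell$ among the positions where the bitwise union ${\bf X}\cup{\bf Y}$ equals $1$, and that this set of positions is exactly $[T_{\cal A}]\cup[T_{\cal B}]$, of size $n$. Consequently $\ell\mapsto L_\ell$ is strictly increasing on $[T_{\cal A}]\cup[T_{\cal B}]$ — if $\ell<\ell'$ both lie in it then $L_{\ell'}\ge L_\ell+({\bf X}\cup{\bf Y})[\ell']=L_\ell+1$ — so it is injective, and its image runs over all of $\{1,\ldots,n\}$. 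Restricting to $[T_{\cal A}]$ (resp. $[T_{\cal B}]$) yields the asserted injection for each party, and consistency is immediate since the tentative labels come from the deterministic rolling hash: equal blocks give equal $H$-values, hence equal positions, hence equal $L_\ell$. I would also record here that the homomorphic circuit actually evaluated has multiplicative depth one — a single AND in Step~3 to form ${\bf X}\cup{\bf Y}=({\bf X}\oplus{\bf Y})\oplus({\bf X}\cdot{\bf Y})$, then only additions for the prefix sums of Step~4 — so a two-level HE scheme suffices.

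For (iv), the round count reads straight off Algorithm~\ref{algo1}: Step~2 is one exchange, Steps~5--6 are one round trip, and Steps~3, 4 and 7 are purely local; the relabelings of $[T_{\cal A}]$ and $[T_{\cal B}]$ run in parallel, so the protocol is a fixed chain of $O(1)$ rounds. For communication I would total the transmitted data: Step~2 sends $2m$ encrypted bits, contributing $O(m\alpha)$; Steps~5--6 send, for each of the at most $n$ elements of $[T_{\cal A}]$ (symmetrically for $[T_{\cal B}]$), one masked value $L_\ell+r_\ell$, first as ciphertext and then in the clear, whose size is $O(\lg n+r)$ per value (the label part is $\le n$, the mask part $\le r$); this contributes $O\bigl((n\lg n+rn)\alpha\bigr)$. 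The local steps transmit nothing, so the total is $O\bigl(\alpha(n\lg n+m+rn)\bigr)$.

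The step I expect to be the main obstacle is the communication accounting in (iv): one must argue carefully that Steps~3--4 — the homomorphic union and the homomorphic prefix sums — genuinely need no messages (otherwise they would inflate both complexities), and one must fix the encoding of a masked value of magnitude $O(n+r)$ so that its label part is responsible for the $n\lg n$ term and its mask part for the $rn$ term. By comparison, the round bound and the combinatorial correctness are routine once the rank-function description of $L_\ell$ is in hand, and the privacy claim is a verbatim appeal to Theorem~\ref{th1} together with symmetry.
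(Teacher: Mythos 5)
Your proposal is correct and follows essentially the same route as the paper's proof: the paper likewise identifies $L_\ell$ as $\mathrm{rank}_1(\ell,{\bf X}\cup{\bf Y})$ (which is exactly your strictly-increasing prefix-sum argument for consistency and injectivity), notes that a two-level HE scheme suffices for the single multiplication in Step~3, defers privacy to the encryption strength as established around Theorem~\ref{th1}, and obtains the bounds by the same accounting ($m$ bits at Step~2, $n\lg n$ plus the masks at Steps~5--6, with the round count read directly off the algorithm). Your write-up simply spells out the details the paper states as ``evident.''
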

\begin{proof}
A two-level HE scheme allows sufficient number of additions and a single multiplication of encrypted integers.
Thus, we can securely represent the set $[T_{\cal A}]\cup [T_{\cal B}]$ by
$E_{\cal A}({\bf X}\cup{\bf Y})=\left(E_{\cal A}({\bf X})\oplus E_{\cal A}({\bf Y})\right) \oplus
\left(E_{\cal A}({\bf X})\cdot E_{\cal A}({\bf Y})\right)$,
and assign consistent labels $L_\ell = \mathrm{\bf rank}_1(\ell, {\bf X}\cup{\bf Y})$
for all $\ell\in [T_{\cal A}]\cup [T_{\cal B}]$.
Thus, the parties can securely obtain consistent labels, with the security level depending on the encryption strength.
Regarding the communication complexity, the plaintexts sent have size of $m$ bits (Step2) and $n\lg n$ bits (Step5),
from which we can immediately derive the complexity using the other parameters.
The round complexity is evident.
\hspace{\fill}$\Box$
\end{proof}

\section{Experimental Results}

\begin{table*}[t]
\begin{center}
\caption{
Execution time (seconds) comparison for Phase 1, showing the preprocessing and relabeling time per label
for the number $n$ of characteristic substrings to be relabeled.
Here, ``Preprocessing'' denotes the time required to construct the shared dictionary
and ``Relabeling (per label)'' denotes the time needed to change a single label using the dictionary.
} 
\label{tab2}
\begin{tabular}{lrrr}
\hline
& \qquad\qquad $n$ \qquad & \qquad sEDM~\cite{Nakagawa2018} \qquad\qquad & Ours \qquad \\ \hline
  & 100 & \qquad 9.772 \qquad\qquad  & \qquad 3.147 \qquad  \\
 Preprocessing \qquad & 1000\qquad &\qquad 76.996 \qquad\qquad  & \qquad 31.150 \qquad \\ 
& 10000\qquad & \qquad 725.463 \qquad\qquad & \qquad 304.314 \qquad \\ 
& 100000\qquad & \qquad 7264.354 \qquad\qquad & \qquad 3030.031 \qquad \\ \hline
 & 100 & \qquad 13.977 \qquad\qquad & \qquad 0.010 \qquad \\
 Relabeling (per label) \qquad & 1000\qquad &  \qquad 160.995 \qquad\qquad & \qquad 0.047 \qquad \\ 
& 10000\qquad & \qquad 1066.259 \qquad\qquad & \qquad 0.319 \qquad \\ 
& 100000\qquad & \qquad NA ($>10000$) \qquad\qquad & \qquad 2.124 \qquad \\ \hline
\end{tabular}
\end{center}
\end{table*}

\begin{table*}[t]
\begin{center}
\caption{
Execution time (seconds) of approximated EDM computation for Escherichia coli (100MB).
Here, $n$ is the number of characteristic substrings used for EDM and
the same rolling hash in Table~\ref{tab2} is used for each $n$.
$L_1$-distance is computed by the sEDM~\cite{Nakagawa2018}.
} 
\label{tab3}
\begin{tabular}{lrr}
\hline
Detail of EDM computation & \qquad\qquad $n$ \qquad & time \qquad \\ \hline
  & 100 & \qquad 4.055 \qquad  \\
 Relabeling by our algorithm (Phase 1) \qquad & 1000 & \qquad 63.597 \qquad \\ 
& 10000\qquad & \qquad 2506.431 \qquad \\  \hline
 & 100 & \qquad 4.097 \qquad \\
 $L_1$-distance computation by sEDM~\cite{Nakagawa2018} (Phase 2) \qquad & 1000\qquad & \qquad 4.135 \qquad \\ 
& 10000\qquad & \qquad 4.689 \qquad \\  \hline
\end{tabular}
\end{center}
\end{table*}

Finally, we compared the practical performance of our algorithm with that of sEDM~\cite{Nakagawa2018}. 
Both algorithms were implemented in C++ based on the two-level HE~\cite{Attrapadung2018} and 
library available from GitHub\footnote{
{\tt https://github.com/herumi/mcl}}, and 
compiled using Apple LLVM version 8.0.0 (clang-800.0.42.1)
under MacOS Mojave 10.14.5.
The algorithm's performance was evaluated on a system with a 2.7 GHz Intel Core i5 CPU
and 8 GB 1867 MHz DDR3 RAM.

Table~\ref{tab2} shows the results for generating $n\in\{100,1000, 10000,100000\}$ different labels
shared by the parties.
The key length of encryption is fixed to 256 bits.
Our algorithm uses the rolling hash modulo $p\in\{1031, 10313, 103123, 1031347\}$ for each $n$, respectively.
This shows the running times for each $n$, where ``Preprocessing'' gives the time $t_1$ required to construct the shared dictionary
and ``Relabeling (per label)'' gives the response time $t_2$ needed to change a single label.
Thus, the total time for each algorithm is $t_1+nt_2$ for each $n$.
Note that the total time is mainly occupied by the relabeling time for both algorithms.
Therefore, these results confirm that our algorithm's computation was significantly lower than that of sEDM in all cases.

Table~\ref{tab3} shows the total time of approximate EDM computation 
for real DNA sequence 
available from Pizza\&Chili Corpus\footnote{{\tt http://pizzachili.dcc.uchile.cl}}.
From this corpus, we use Escherichia coli, known as {\em highly repetitive string}
where 110MB original string is compressed to 5MB by 7-zip\footnote{{\tt https://www.7-zip.org/ }}.
This means that the number of characteristic substrings is relatively smaller,
so the restriction of the examined $n$ up to 10000 is reasonable.
However, in reality, we cannot execute the sEDM~\cite{Nakagawa2018} even for these repetitive strings
due to the cost of relabeling shown in Table~\ref{tab2}.
By the results in Table~\ref{tab2} and~\ref{tab3},
we confirm the efficiency of our algorithm for large-scale data.

\section{Conclusion}
In this paper, we have presented an improvement to a 
previously proposed HE-based secure two-party protocol for computing approximate EDM.
The problem we tackled is reduced to jointly assigning minimum consistent labels
from $X\cup Y\subseteq\{1,2,\ldots,m\}$ to $\{1,2,\ldots,n\}$.
The fact that recent two-level HE systems allow sufficient number of additions and a single multiplication
over ciphertexts enabled us to significantly improve the execution time.
From a cryptographic point of view, $m$ should be sufficiently large (i.e., we assume that $X\cup Y$ is sparse)
so that it is difficult for Alice to learn about Bob's labels from the distribution of ones in his label vector.
In contrast, $m$ should be smaller for saving the communication cost.
We plan to investigate this problem further in future work.

To the best of our knowledge, existing two-party protocols for consistent labeling need both additive and 
multiplicative homomorphic operations over the ciphertexts.
Since an HE system that only involves additive operations is computationally less taxing,
whether or not we can solve the relabeling problem by only exploiting  
additive homomorphism is an important practical question.

\bigskip
\noindent
\textbf{Acknowledgments.}
This work was supported by JST CREST (JPMJCR1402),
KAKENHI (16K16009, 17H01791, 17H00762 and 18K18111) and Fujitsu Laboratories Ltd. 
The authors thank anonymous reviewers for their helpful comments.

\bibliographystyle{plain}
\bibliography{mybib}

\end{document}